\documentclass{article}
\usepackage{amsmath,amsthm}
\newtheorem{theorem}{Theorem}
\newcommand{\yes}{Yes}
\newcommand{\no}{No}
\newcommand{\yesins}{\yes-instance}
\newcommand{\noins}{\no-instance}

\begin{document}
\title{Recognizing Linked Domain in Polynomial Time}
\author{Yongjie Yang\\
Saarland University}
\date{}
\maketitle

The celebrated Gibbard-Satterthwaite Theorem~\cite{Gibbard73,Satterthwaite75} states that when there are more than two candidates, any surjective social choice function which is defined over the universal domain of preferences and is strategy-proof must be dictatorial. Aswal, Chatterji and Sen generalize~\cite{Aswal2003} the Gibbard-Satterthwaite theorem by showing that the Gibbard-Satterthwaite theorem still holds if the universal domain is replaced with the linked domain.

An election is a tuple $E=(C,V)$ where $C$ is a set of candidates and $V$ is a multiset of votes each of which is defined as a linear order over $C$.
Two candidates $a,b\in C$ in $E$ are {\it{connected}} if there are two votes where one ranks $a$ first and $b$ second, and the other ranks $b$ first and $a$ second.
The election $E$ is {\it{linked}} if there is a linear order $(c_1,c_2,\dots,c_m)$ over $C$ such that the first two candidates in the order are connected and every other candidate is connected to at least two candidates before him in the order~\cite{Aswal2003}, i.e., $c_1$ and $c_2$ are connected, and every $c_i, i\geq 3$, is connected to at least two candidates in $\{c_1,c_2,\dots,c_{i-1}\}$.
In this note, we show that determining whether an election is linked is polynomial-time solvable.

\begin{theorem}
Determining whether an election is linked can be done in polynomial time.
\end{theorem}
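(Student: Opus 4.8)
The plan is to reduce linkedness to a sequence of monotone reachability (bootstrap percolation) computations on an auxiliary graph. First I would compute, in polynomial time, the \emph{connection graph} $G=(C,\mathcal{E})$ in which $\{a,b\}\in\mathcal{E}$ precisely when $a$ and $b$ are connected in $E$: scanning all votes once, record for each ordered pair $(x,y)$ whether some vote ranks $x$ first and $y$ second, and then declare $\{a,b\}$ an edge iff both $(a,b)$ and $(b,a)$ were recorded. By the definition in the excerpt, $E$ is linked iff some ordering $(c_1,\dots,c_m)$ of $C$ satisfies $\{c_1,c_2\}\in\mathcal{E}$ and, for every $i\ge 3$, the vertex $c_i$ has at least two neighbours in $G$ among $c_1,\dots,c_{i-1}$.

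Next I would note that this is exactly a threshold-$2$ reachability condition from a starting edge. For an edge $e=\{u,v\}\in\mathcal{E}$, define $R(e)\subseteq C$ by the closure process: set $A:=\{u,v\}$, and while some vertex outside $A$ has at least two neighbours in $A$, add one such vertex to $A$; let $R(e)$ be the final set. The crucial observation is that $R(e)$ does not depend on the choices made: adding a vertex never destroys an edge, so a vertex eligible to be added to $A$ is still eligible with respect to every superset of $A$; a short confluence argument (compare two maximal runs and look at the first vertex on which they differ) then shows every maximal run ends at the same set. Each $R(e)$ is computable in polynomial time, since the process has at most $|C|$ rounds and each round scans all vertices and counts neighbours in $A$.

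Then I claim $E$ is linked if and only if $R(e)=C$ for some $e\in\mathcal{E}$. If $R(e)=C$, then listing the vertices in the order they enter $A$ yields a linked ordering. Conversely, if $(c_1,\dots,c_m)$ witnesses that $E$ is linked, take $e=\{c_1,c_2\}$; a straightforward induction on $i$ shows every $c_i$ eventually joins $A$ in the closure process (for $i\ge 3$ its two earlier neighbours $c_j,c_k$ with $j,k<i$ are already in $A$ by the induction hypothesis), so $R(e)=C$. Hence the algorithm is: build $G$, and for each of the at most $\binom{|C|}{2}$ edges $e$ compute $R(e)$, accepting iff $R(e)=C$ for some $e$. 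The degenerate cases $|C|\le 2$ are handled directly (for $|C|=2$, just test whether the two candidates are connected). The whole procedure runs in time polynomial in the size of $E$.

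I do not expect a genuine obstacle; the only points requiring care are the well-definedness of $R(e)$ and the fact that it suffices to try each possible starting edge independently instead of searching over all orderings --- both consequences of the monotonicity of the addition rule --- together with a clean statement of the induction in the converse direction.
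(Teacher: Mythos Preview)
Your proposal is correct and follows essentially the same approach as the paper: build the connection graph, enumerate the possible starting pairs (the paper tries all $\binom{m}{2}$ ordered pairs, you only the edges), and for each run a greedy threshold-$2$ closure, accepting iff the closure reaches all candidates. Your monotonicity/confluence argument for the well-definedness of $R(e)$ is exactly the observation underlying the paper's contradiction argument that the greedy cannot get stuck whenever a linked order from the chosen seed exists.
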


\begin{proof}
Let $E=(C,V)$ be an election. Let $m=|C|$.
We create a graph $G=(V, A)$ as follows. For each candidate we create a vertex in $V$, and for each pair of connected candidates $a,b\in {C}$ we create an edge between $a$ and $b$. So, $|V|=m$. For a vertex $c$, $N(c)$ is the set of neighbors of $c$. Clearly, this graph can be constructed in polynomial time. The linked domain recognition problem is equivalent to determining whether there is a linear order $(c_{\pi(1)},\dots,c_{\pi(m)})$ of $V$ such that there is an edge between the first two vertices, and every $i$th vertex where $i\geq 3$ has at least two neighbors in the first $i-1$ vertices. We call this order a linked order of $G$. The algorithm guesses the first two vertices, say $a$ and $b$. The order of $a$ and $b$ does not matter. Let $c_{\pi(1)}=a, c_{\pi(2)}=b$ and $S_2=\{a,b\}$. This breaks down the instance into $\frac{m(m-1)}{2}$ subinstances. Then, we solve each subintance by a greedy algorithm. In particular, for each $i=3,4,\dots,m$, if there is a candidate $c\in C\setminus S_{i-1}$ such that $|N(c)\cap S_{i-1}|\geq 2$, then we order $c$ after the current candidate, i.e., let $c_{\pi(i)}=c$, and set $S_i=S_{i-1}\cup \{c\}$. If we find a linked order by doing so, then the subinstance is a {\yesins}. Otherwise, the subinstance is a {\noins}. The reason is as follows. Assume that when the algorithm is executed at some $i\in \{3,4,\dots,m\}$ every vertex of $C\setminus S_{i-1}$ has at most one neighbor in $S_{i-1}$. For the sake of contradiction, assume that there is a linked order $(c_{\pi'(1)},\dots,c_{\pi'(m)})$ such that $c_{\pi'(1)}=a$ and $c_{\pi'(2)}=b$. Let $j$ be the smallest integer such that $c_{\pi'(j)}\in C\setminus S_{i-1}$, i.e., $c_{\pi'(j)}$ is the first candidate among $C\setminus S_{i-1}$ in the linked order. Clearly, $j\geq 3$ and $\{c_{\pi'(1)},\dots, c_{\pi'(j-1)}\}\subseteq S_{i-1}$. However, according to our assumption, $c_{\pi'(j)}$ has at most one neighbor in $\{c_{\pi'(1)},\dots, c_{\pi'(j-1)}\}$, a contradiction.
\end{proof}

Recently, Pramanik~\cite{DBLP:journals/scw/Pramanik15} studied $\beta$-domain which differs from linked domain in that the connectedness condition is replaced with the weak connectedness condition. The above algorithm can be adapted to recognize $\beta$-domain in polynomial time as well. The only difference is that for the same election the auxiliary graph constructed in the algorithm may be (but not necessarily) different (in $\beta$-domain there is an edge between two candidates if they are weakly connected).

\bibliographystyle{plain}

\begin{thebibliography}{1}

\bibitem{Aswal2003}
N.~Aswal, S.~Chatterji, and A.~Sen.
\newblock Dictatorial domains.
\newblock {\em Economic Theory}, 22(1):45--62, 2003.

\bibitem{Gibbard73}
A.~Gibbard.
\newblock Manipulation of voting schemes: A general result.
\newblock {\em Econometrica.}, 41(4):587--601, 1973.

\bibitem{DBLP:journals/scw/Pramanik15}
A.~Pramanik.
\newblock Further results on dictatorial domains.
\newblock {\em Social Choice and Welfare}, 45(2):379--398, 2015.

\bibitem{Satterthwaite75}
M.~Satterthwaite.
\newblock Strategy-proofness and {A}rrow's conditions: Existence and
  correspondence theorems for voting procedures and social welfare functions.
\newblock {\em Journal of Economic Theory}, 10(2):187--217, 1975.

\end{thebibliography}

\end{document}